\newtheorem{claim}{}[section]
\newtheorem{theorem}[claim]{Theorem}
\newtheorem{lemma}[claim]{Lemma}
\newtheorem{proposition}[claim]{Proposition}
\newtheorem{corollary}[claim]{Corollary}
\theoremstyle{remark}
\renewenvironment{proof}{\noindent{\it Proof. \hskip0pt}}
                      {$\square$\par\medskip}
\begin{document}
\baselineskip 5.9 truemm
\parindent 1.5 true pc

\newcommand\lan{\langle}
\newcommand\ran{\rangle}
\newcommand\tr{\operatorname{Tr}}
\newcommand\ot{\otimes}
\newcommand\ttt{{\text{\bf t}}}
\newcommand\rank{\ {\text{\rm rank of}}\ }
\newcommand\choi{{\rm C}}
\newcommand\dual{\star}
\newcommand\flip{\star}
\newcommand\cp{{{\mathbb C}{\mathbb P}}}
\newcommand\ccp{{{\mathbb C}{\mathbb C}{\mathbb P}}}
\newcommand\pos{{\mathcal P}}
\newcommand\tcone{T}
\newcommand\mcone{K}
\newcommand\superpos{{{\mathbb S\mathbb P}}}
\newcommand\blockpos{{{\mathcal B\mathcal P}}}
\newcommand\jc{{\text{\rm JC}}}
\newcommand\dec{{\mathbb D}{\mathbb E}{\mathbb C}}
\newcommand\decmat{{\mathcal D}{\mathcal E}{\mathcal C}}
\newcommand\ppt{{\mathcal P}{\mathcal P}{\mathcal T}}
\newcommand\pptmap{{\mathbb P}{\mathbb P}{\mathbb T}}
\newcommand\xxxx{\bigskip\par ================================}
\newcommand\join{\vee}
\newcommand\meet{\wedge}
\newcommand\ad{\operatorname{Ad}}
\newcommand\ldual{\varolessthan}
\newcommand\rdual{\varogreaterthan}
\newcommand{\slmp}{{\mathcal M}^{\text{\rm L}}}
\newcommand{\srmp}{{\mathcal M}^{\text{\rm R}}}
\newcommand{\smp}{{\mathcal M}}
\newcommand{\id}{{\text{\rm id}}}
\newcommand\tsum{\textstyle\sum}
\newcommand\hada{\Theta}
\newcommand\ampl{\mathbb A^{\text{\rm L}}}
\newcommand\ampr{\mathbb A^{\text{\rm R}}}
\newcommand\amp{\mathbb A}
\newcommand\rk{{\text{\rm rank}}\,}
\newcommand\calI{{\mathcal I}}
\newcommand\bfi{{\bf i}}
\newcommand\bfj{{\bf j}}
\newcommand\bfk{{\bf k}}
\newcommand\bfl{{\bf l}}
\newcommand\bfzero{{\bf 0}}
\newcommand\bfone{{\bf 1}}
\newcommand\pr{\prime}
\newcommand\re{{\text{\rm Re}}\,}
\newcommand\sr{{\text{\rm SR}}\,}
\newcommand\spa{{\text{\rm span}}\,}

\title{Exposedness of elementary positive maps between matrix algebras}

\author{Seung-Hyeok Kye}
\address{Department of Mathematics and Institute of Mathematics, Seoul National University, Seoul 151-742, Korea}
\email{kye at snu.ac.kr}

\keywords{exposed positive linear maps, matrix algebras, Choi matrices, duality, bi-dual face, identity maps,
super-positive maps, entanglement breaking maps, separable states, block-positive matrices}
\subjclass{15A30, 81P15, 46L05, 46L07}
\thanks{partially supported by NRF-2020R1A2C1A01004587, Korea}

\begin{abstract}
The positive linear maps $\ad_s$ which send matrices $x$ to $s^*xs$ play important roles
in quantum information theory as well as matrix theory.
It was proved by Marciniak [Linear Multilinear Alg. \bf 61 \rm (2013), 970--975] that
the map $\ad_s$ generates an exposed ray of the convex cone of all positive linear maps.
In this note, we provide two alternative proofs, using Choi matrices and Woronowicz's method,
respectively.
\end{abstract}
\maketitle


\section{Introduction}

For a given $m\times n$ matrix $s$ with complex entries, the linear map $\ad_s:M_m\to M_n$ from the $m\times m$ matrix algebra $M_m$
into $M_n$ is defined by
$$
\ad_s(x)=s^*xs,\qquad x\in M_m.
$$
These maps appear typically in order related preserver problems \cite{{marcin_ban},molnar,semrl_souror} in matrix theory,
together with the compositions $\ad_s\circ\ttt$ by the transpose map $\ttt$. They
also play central roles in current quantum information theory; the convex combinations of $\ad_s$ are
completely positive maps, which are quantum channels when they preserve traces.
Especially, convex combinations of $\ad_s$ with matrices $s$ of rank one
are entanglement breaking quantum channels \cite{hsrus}, which give rise to bi-partite separable states through the
Jamio\l kowski--Choi isomorphism \cite{{choi75-10},{jam_72}}.
Entanglement breaking maps are also called superpositive maps \cite{{ando-04},{ssz}}.

After it was known \cite{yopp} that the maps $\ad_s$ and $\ad_s\circ\ttt$
generate exposed rays of the convex cone of all positive maps
when $s$ has minimal or maximal rank,
Marciniak \cite{marcin_exp} proved that this is the case for an arbitrary matrix $s$. His proof
depends on the characterization of rank one non-increasing positive maps \cite{marcin_ban}.
The purpose of this note is to provide two
alterative proofs with Choi matrices  \cite{choi75-10} and Woronowicz's method  \cite{{woronowicz-1},{woro_letter}}, respectively.

In this paper, we say that a positive linear map $\phi:M_m\to M_n$ is {\sl exposed} when it generates
an exposed ray of the convex cone $\mathbb P_1[M_m,M_n]$ of all positive linear maps from $M_m$ into $M_n$.
In the next section, we collect some preliminaries, and show that the exposedness of the identity map
is enough to confirm that the maps $\ad_s$ and $\ad_s\circ\ttt$ are
exposed for every $s$. The exposedness of the identity map will be proved in Section 3
with the corresponding Choi matrices.
We also prove it using Woronowicz's method in Section 4.

The author is grateful to Marcin Marciniak and Dariusz Chru\'{s}ci\'{n}ski for their valuable comments on the draft.


\section{Preliminaries}

Throughout this note, we will use the bilinear pairing
\begin{equation}\label{bi-lin}
\lan a,b\ran:=\tr(ab^\ttt)
\end{equation}
for $m\times n$ matrices $a$ and $b$, where $\tr$ denotes the trace.
For a linear map $\phi:M_m\to M_n$, the Choi matrix $\choi_\phi$
is defined \cite{choi75-10} by
$$
\choi_\phi=\sum_{i,j=1}^m |i\ran\lan j|\ot\phi(|i\ran\lan j|)\in
M_m\ot M_n.
$$
Then we have the identity
\begin{equation}\label{choi-rel}
\lan a\ot b,\choi_\phi\ran=\lan b,\phi(a)\ran, \qquad a\in M_m,\
b\in M_n,\ \phi\in L(M_m,M_n),
\end{equation}
where the bilinear pairings on the left-hand and right-hand sides are defined by
(\ref{bi-lin}) on matrix algebras $M_m\ot M_n$ and $M_n$, respectively.
The identity (\ref{choi-rel}) tells us that
the $(k,\ell)$ entry of the $(i,j)$ block of $\choi_\phi\in M_m(M_n)$ is given by
$\lan |k\ran\lan \ell|,\phi(|i\ran\lan j|)\ran$, when we identity $M_m\ot M_n$ with $M_m(M_n)$.

For a linear map $\phi:M_m\to M_n$, the adjoint map $\phi^*:M_n\to M_m$ is defined by
$$
\lan a,\phi^*(b)\ran=\lan\phi(a),b\ran,\qquad a\in M_m, b\in M_n.
$$
Then the Choi matrix $\choi_{\phi^*}\in M_n\ot M_m$
of the adjoint map $\phi^*$ is the flip of the Choi matrix $\choi_\phi\in M_m\ot M_n$.

A matrix $\varrho\in M_m\ot M_n$ is called separable if $\varrho$ is the nonnegative sum of rank one projections
onto product vectors in $\mathbb C^m\ot\mathbb C^n$. It turns out that $\varrho$ is separable if and only if
it is the Choi matrix of a map in the convex cone
$$
\superpos_1={\text{\rm conv}}\, \{\ad_s: \rank s=1\},
$$
where ${\text{\rm conv}}\, S$ denotes the convex hull of $S$. Maps in $\superpos_1$ are called superpositive, or
entanglement breaking. We also note that a matrix $\varrho$ in $M_m\ot M_n$ is the Choi matrix of a positive map
if an only if $\lan \zeta|\varrho|\zeta\ran\ge 0$
for every vector product $|\zeta\ran\in\mathbb C^m\ot\mathbb C^n$ \cite{jam_72}.
Such a matrix $\varrho\in M_m\ot M_n$ is called block-positive.

Considering the correspondence between
linear functionals on the tensor product of spaces and linear maps between spaces, it is natural to define \cite{eom-kye}
the bilinear pairing between tensor product $M_m\ot M_n$ and the mapping space $L(M_m,M_n)$ by
\begin{equation}\label{11}
\lan a\ot b,\phi\ran_1=\lan b,\phi(a)\ran,\qquad a\ot b\in M_m\ot M_n,\ \phi\in L(M_m,M_n).
\end{equation}
By the identity (\ref{choi-rel}), we have
$\lan \varrho,\phi\ran_1=\lan\varrho,\choi_\phi\ran$. Because every $\varrho\in M_m\ot M_n$
can be expressed by $\varrho=\choi_\psi$ for a unique $\psi\in L(M_m,M_n)$,
it is also natural to define \cite{{sko-laa}, {ssz},{gks}} the bilinear pairing between mapping spaces by
\begin{equation}\label{22}
\lan \psi,\phi\ran_2=\lan\choi_\psi,\choi_\phi\ran,\qquad \phi,\psi\in L(M_m,M_n).
\end{equation}
Since $\varrho=\choi_\psi$ implies
$\lan\varrho,\phi\ran_1=\lan\varrho,\choi_\phi\ran=\lan\choi_\psi,\choi_\phi\ran=\lan\psi,\phi\ran_2$,
we do not distinguish the bilinear pairings $\lan\,\cdot\,
,\cdot\,\ran$, $\lan\,\cdot\, ,\cdot\,\ran_1$ and $\lan\,\cdot\,
,\cdot\,\ran_2$, and use the notation $\lan\,\cdot\, ,\cdot\,\ran$
for them.

With this bilinear pairing, the convex cone $\mathbb P_1$
is dual to the convex cone $\superpos_1$. That is,
we have $\phi\in\mathbb P_1$ if and only if $\lan\phi,\psi\ran\ge 0$ for every $\psi\in\superpos_1$,
and $\psi\in\superpos_1$ if and only if $\lan\phi,\psi\ran\ge 0$ for every $\phi\in\mathbb P_1$.
Equivalently, $\phi\in\mathbb P_1$ if and only if $\lan\varrho,\phi\ran\ge 0$ for every separable $\varrho\in M_m\ot M_n$,
and $\varrho$ is separable if and only if $\lan\varrho,\phi\ran\ge 0$ for every $\phi\in\mathbb P_1$.
See the semi-expository paper \cite{kye-exposit} for more details.

For any $\phi\in\mathbb P_1$, the set $\phi^\pr=\{\psi\in\superpos_1: \lan \phi,\psi\ran=0\}$ is an exposed face of
the convex cone $\superpos_1$. The bi-dual face
$$
\phi^{\pr\pr}=\{\psi\in\mathbb P_1: \lan\varrho,\psi\ran=0\ {\text{\rm for every}}\ \varrho\in\phi^\pr\}
$$
turns out to be the smallest exposed face containing $\phi$, and so we see
\cite{eom-kye} that $\phi$ generates an exposed ray of $\mathbb P_1$
if and only if $\phi^{\pr\pr}$ consists of nonnegative scalar multiples of $\phi$.
We note that the dual object $\phi^\pr$ may be taken in $M_m\ot M_n$ or $L(M_m,M_n)$, according to the
bilinear pairing used. If we use the bilinear pairing (\ref{22}) then $\psi\in\phi^{\pr\pr}$ if and only if the following
\begin{equation}\label{bidual-2}
s\in M_{m\times n},\ \rk s=1,\ \lan \ad_s,\phi\ran=0\
\Longrightarrow\
\lan\ad_s,\psi\ran=0
\end{equation}
holds. On the other hand, $\psi\in\phi^{\pr\pr}$ is equivalent to
\begin{equation}\label{bidual-1}
a\in M_m^+,\ |\eta\ran\in\mathbb C^n,\ \lan a\ot|\eta\ran\lan\eta|,\phi\ran=0\
\Longrightarrow\
\lan a\ot|\eta\ran\lan\eta|,\psi\ran=0,
\end{equation}
when we use the bilinear pairing (\ref{11}).

By the singular value decomposition, we know that every $m\times n$ matrix is of the
form $u\sigma v^*$, where $u\in M_m$ and $v\in M_n$ are nonsingular, and $\sigma$ is given by
$$
\sigma=\sum_{i=1}^r |i\ran\lan i|\in M_{m\times n}.
$$
Because $\ad_{u\sigma v^*}=\ad_{v^*}\circ\ad_\sigma\circ\ad_{u}$ and
and the map $\phi\mapsto \ad_{v^*}\circ\phi\circ\ad_{u}$ is an
affine isomorphism between the real vector space $H(M_m,M_n)$ of all Hermiticity preserving linear maps, we see that the map
$\ad_{u\sigma v^*}$ is exposed in $\mathbb P_1[M_m,M_n]$ if and only
if $\ad_\sigma$ is exposed in $\mathbb P_1[M_m,M_n]$.

We suppose that $r\le n$, and consider the following two maps $S:M_r\to M_n$ and $T:M_n\to M_r$ defined by
\begin{equation}\label{ST-def}
S: a\mapsto \left(\begin{matrix} a&0\\0&0\end{matrix}\right)\in M_n,\qquad
T: \left(\begin{matrix} a&b\\c&d\end{matrix}\right)\mapsto a\in M_r.
\end{equation}

\begin{proposition}\label{prop-dual-exposed}
If $\phi:M_m\to M_r$ is exposed then $S\circ\phi:M_m\to M_n$ is also exposed.
\end{proposition}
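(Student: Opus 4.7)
The plan is to show that the bi-dual face $(S\circ\phi)^{\pr\pr}$ consists of nonnegative scalar multiples of $S\circ\phi$. By the criterion recalled in the paragraph containing (\ref{bidual-2}), this is exactly what is needed for $S\circ\phi$ to generate an exposed ray of $\mathbb P_1[M_m,M_n]$. Throughout I work with the pairing (\ref{11}) and criterion (\ref{bidual-1}).

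Step one is to exploit the support of $S$. Fix $\psi\in(S\circ\phi)^{\pr\pr}$ and split $\mathbb C^n=\mathbb C^r\oplus\mathbb C^{n-r}$ so that $S$ identifies $M_r$ with the top-left $r\times r$ corner of $M_n$. I would apply criterion (\ref{bidual-1}) to product vectors $|\eta\ran\in\mathbb C^n$ whose top block vanishes. For such $|\eta\ran$ the rank one projection $|\eta\ran\lan\eta|$ and the matrix $S(\phi(a))$ have disjoint supports, so $\lan a\ot|\eta\ran\lan\eta|,S\circ\phi\ran=0$ for every $a\in M_m^+$. By hypothesis $\lan a\ot|\eta\ran\lan\eta|,\psi\ran=0$ as well. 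Since $\psi(a)\ge 0$, the vanishing of this diagonal form on the entire bottom subspace forces the bottom rows and columns of $\psi(a)$ to vanish, so $\psi$ factors as $\psi=S\circ\tilde\psi$ for a positive linear map $\tilde\psi:M_m\to M_r$.

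Step two is to locate $\tilde\psi$ inside the bi-dual face $\phi^{\pr\pr}$. Given $a\in M_m^+$ and $|\eta_1\ran\in\mathbb C^r$ with $\lan a\ot|\eta_1\ran\lan\eta_1|,\phi\ran=0$, extend $|\eta_1\ran$ by zeros to a vector $|\eta\ran\in\mathbb C^n$ supported in the top block. A direct block computation through the definition of $S$ gives $\lan a\ot|\eta\ran\lan\eta|,S\circ\phi\ran=\lan a\ot|\eta_1\ran\lan\eta_1|,\phi\ran=0$, the hypothesis on $\psi$ then yields $\lan a\ot|\eta\ran\lan\eta|,\psi\ran=0$, and the factorization $\psi=S\circ\tilde\psi$ unpacks this back to $\lan a\ot|\eta_1\ran\lan\eta_1|,\tilde\psi\ran=0$. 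Thus $\tilde\psi$ satisfies (\ref{bidual-1}) relative to $\phi$, i.e., $\tilde\psi\in\phi^{\pr\pr}$.

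Finally, exposedness of $\phi$ gives $\tilde\psi=\lambda\phi$ for some $\lambda\ge 0$, so $\psi=S\circ\tilde\psi=\lambda(S\circ\phi)$, as required. The one step that demands a genuine argument is the first one, where positive semidefiniteness of $\psi(a)$ must be used to convert a vanishing quadratic form on a subspace into the vanishing of a whole off-diagonal block of $\psi(a)$; the remainder consists of explicit block bookkeeping through the embedding $S$.
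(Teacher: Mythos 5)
Your proposal is correct and follows essentially the same route as the paper: the paper's proof also first uses vectors supported outside the range of $S$ (it tests only with the basis projections $|k\ran\lan k|$, $k=r+1,\dots,n$, which by positive semidefiniteness of $\psi(a)$ already kills the off-diagonal blocks and gives $\psi=S\circ T\circ\psi$), and then shows $T\circ\psi\in\phi^{\pr\pr}$ by extending vectors $|\zeta\ran\in\mathbb C^r$ by zeros, exactly as in your step two. The only cosmetic difference is that you test with the whole bottom subspace where the paper tests with basis vectors; both hinge on the same fact that a positive semidefinite matrix with vanishing diagonal block has vanishing corresponding rows and columns.
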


\begin{proof}
Suppose that $\psi:M_m\to M_n$ belongs to $(S\circ\phi)^{\pr\pr}$.
Then for every $a\in M_m^+$ and $k=r+1,\dots,n$, we have
$\lan a\ot |k\ran\lan k|, S\circ\phi\ran=\lan |k\ran\lan k|, S\circ\phi(a)\ran=0$,
which implies $\lan |k\ran\lan k|, \psi(a)\ran=0$. Therefore, we have $S\circ T\circ\psi(a)=\psi(a)$.
In order to show that $T\circ\psi\in\phi^{\pr\pr}$, suppose that $a\in M_m$, $|\zeta\ran\in\mathbb C^r$ and
$\lan a\ot |\zeta\ran\lan\zeta|,\phi\ran=0$. Then we have $\lan |\zeta\ran\lan\zeta|, \phi(a)\ran=0$.
Writing $|\zeta^\star\ran=|\zeta\ran\oplus 0\in\mathbb C^n$,
we have $\lan|\zeta^\star\ran\lan\zeta^\star|, S\circ\phi(a)\ran=0$. Since $\psi\in (S\circ\phi)^{\pr\pr}$, we have
$\lan|\zeta^\star\ran\lan\zeta^\star|,\psi(a)\ran=0$, which implies $\lan|\zeta\ran\lan\zeta|, T\circ\phi(a)\ran=0$.
Therefore, we have $T\circ\psi\in\phi^{\pr\pr}$ and $T\circ\psi=\lambda\phi$, because $\phi$ is exposed.
This implies $\psi=S\circ T\circ\psi=\lambda (S\circ\phi)$, as it was required.
\end{proof}

\begin{center}
\setlength{\unitlength}{0.3 truecm}
\begin{picture}(10,10)
\put (0,8){$M_m$}
\put (2.5,8){\vector (4,-3){8}}
\put (11,1){$M_n$}
\put (2.5,8.5){\vector(1,0){8}}
\put (11,8){$M_r$}
\put (12,7.5){\vector(0,-1){5.1}}
\put (11.5,2.4){\vector(0,1){5,1}}
\put (6.4,8.9){$\phi$}
\put (5.7,4.0){$\psi$}
\put (12.2,4.9){$S$}
\put (10.4,4.9){$T$}
\end{picture}
\end{center}

Suppose that the identity map $\id_r:M_r\to M_r$ is shown to be exposed. Then we have $S=S\circ\id_r$
is exposed by Proposition \ref{prop-dual-exposed}. Since
$\phi\mapsto\phi^*$ is an affine isomorphism between $H(M_r,M_n)$ and $H(M_n,M_r)$, we see that $T=S^*$ is also exposed.
Now, we consider the exposed map $T:M_m\to M_r$ with the same definition as in (\ref{ST-def}).
Then we may conclude that $\ad_\sigma=S\circ T$ is exposed by Proposition \ref{prop-dual-exposed} again.
In the next two sections, we prove that the identity map $\id_r$ is exposed.
We also note that $\phi\mapsto \phi\circ\ttt$ gives rise to an affine isomorphism of $H(M_m,M_n)$, and so
exposedness of $\phi:M_m\to M_m$ implies that of $\phi\circ\ttt:M_m\to M_n$.


\section{A proof with Choi matrices}

In this section, we show that the identity map $\id_r$ is exposed. To see this, we suppose that
$\psi\in \mathbb P_1[M_r,M_r]$ belongs to $\id_r^{\pr\pr}$. Then we have
$$
s\in M_{r},\ \rk s=1, \ \lan \id_r,\ad_s\ran=0\
\Longrightarrow\ \lan\psi,\ad_s\ran=0
$$
by (\ref{bidual-2}). We are going to show that $\psi$ is a nonnegative scalar
multiple of $\id_r$.

We will identify $M_r\ot M_r$ with
$M_r(M_r)=M_{r^2}$, and the entries of $\choi_\psi\in M_{r^2}$ will be
denoted by $c_{(i,k),(j,\ell)}$ with $i,j,k,\ell=1,\dots, r$, where $(i,k)$'s and $(j,\ell)$'s are endowed
with the lexicographic orders. Therefore,
$$
c_{(i,k),(j,\ell)} =\lan |ik\ran\lan j\ell|,\choi_\psi\ran =\lan
|i\ran\lan j|\ot |k\ran\lan \ell|,\choi_\psi\ran =\lan |k\ran\lan
\ell|, \psi(|i\ran\lan j|)\ran
$$
is the $(k,\ell)$ entry of the $(i,j)$ block of $\choi_\psi\in
M_r(M_r)$.
When $J$ is a subset of $\{1,2,\dots,r\}\times \{1,2,\dots,r\}$, we
denote by $\varrho_J$ the principal submatrix of $\varrho\in M_r(M_r)$ by
taking $(i,k)$ rows and columns for $(i,k)\in J$.

We begin with the identity map $\id_2$ between $2\times 2$ matrices.
Note that the Choi matrix of $\id_2$ is given by
\begin{equation}\label{choi_id}
\choi_{\id_2}=
\left(\begin{matrix}
1 &0 &0 &1\\
0 &0 &0 &0\\
0 &0 &0 &0\\
1 &0 &0 &1
\end{matrix}\right).
\end{equation}
We first consider block-positive matrices in $M_2\ot M_2$ whose diagonals contain zero entries.
Suppose that $(1,2)$-th and $(2,1)$-th diagonals of $\varrho\in M_2\ot M_2$ are zero.
Then by the definition of
block-positivity, the $2\times 2$ principal submatrices
$\varrho_{\{(1,1),(1,2)\}}$, $\varrho_{\{(2,1),(2,2)\}}$,
$\varrho_{\{(1,1),(2,1)\}}$ and $\varrho_{\{(1,2),(2,2)\}}$ are
positive, and so $\varrho$ must be of the form
\begin{equation}\label{choi-phi-ff}
\varrho_{a,b,\alpha,\beta}=
\left(\begin{matrix}
a &0 &0 &\alpha\\
0 &0 &\beta &0\\
0 &\bar\beta &0 &0\\
\bar\alpha &0 &0 &b
\end{matrix}\right).
\end{equation}
The following is a special case of multi-qubit cases \cite[Theorem 5.5]{han_kye_multi},
but we include a simple proof for the convenience of readers.

\begin{lemma}\label{2x2_block_pos}
Suppose that $a,b\ge 0$ and $\alpha,\beta\in\mathbb C$. Then
$\varrho_{a,b,\alpha,\beta}\in M_2(M_2)$ is block-positive if and
only if $|\alpha|+|\beta|\le \sqrt{ab}$.
\end{lemma}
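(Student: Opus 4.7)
The plan is to verify block-positivity directly from the definition: $\varrho_{a,b,\alpha,\beta}$ is block-positive if and only if $\lan x\ot y\mid\varrho_{a,b,\alpha,\beta}\mid x\ot y\ran\ge 0$ for every $|x\ran,|y\ran\in\mathbb C^2$. Writing $|x\ran=(x_1,x_2)^{\mathrm T}$ and $|y\ran=(y_1,y_2)^{\mathrm T}$, the product vector $|x\ot y\ran$ has coordinates $(x_1y_1,x_1y_2,x_2y_1,x_2y_2)^{\mathrm T}$ in lexicographic order. Since only six entries of $\varrho_{a,b,\alpha,\beta}$ are nonzero, expanding the quadratic form gives the short expression
$$
a|x_1|^2|y_1|^2+b|x_2|^2|y_2|^2+2\re\bigl(\alpha\,\bar x_1 x_2\bar y_1 y_2\bigr)+2\re\bigl(\beta\,\bar x_1 x_2 y_1\bar y_2\bigr),
$$
and block-positivity is precisely the non-negativity of this expression for all $x_i,y_j\in\mathbb C$.

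The crucial observation is that $p:=\bar x_1 x_2$ and $q:=\bar y_1 y_2$ have moduli $|p|=|x_1||x_2|$ and $|q|=|y_1||y_2|$ that are pinned by the moduli of the $x_i,y_j$, but their phases are independent. The first cross-term equals $2\re(\alpha pq)$ and the second equals $2\re(\beta p\bar q)$, so their arguments $\arg(\alpha pq)=\arg\alpha+\arg p+\arg q$ and $\arg(\beta p\bar q)=\arg\beta+\arg p-\arg q$ can be simultaneously set to $\pi$ by an appropriate choice of $\arg p$ and $\arg q$. Minimizing over phases with fixed moduli therefore yields
$$
\inf\ \lan x\ot y\mid\varrho_{a,b,\alpha,\beta}\mid x\ot y\ran
=a|x_1|^2|y_1|^2+b|x_2|^2|y_2|^2-2(|\alpha|+|\beta|)|x_1||x_2||y_1||y_2|,
$$
where the infimum runs over phases with the given moduli.

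Setting $X=|x_1||y_1|$ and $Y=|x_2||y_2|$, which independently range over $[0,\infty)$, block-positivity reduces to the inequality $aX^2+bY^2\ge 2(|\alpha|+|\beta|)XY$ for all $X,Y\ge 0$. By AM-GM this is equivalent to $|\alpha|+|\beta|\le\sqrt{ab}$, with the tight $(X,Y)$ realizing $aX^2=bY^2$; moreover the moduli $|x_i|,|y_j|$ can clearly be chosen to produce any prescribed $X,Y\ge 0$, so there is no loss in the reduction. This gives both directions of the equivalence.

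The main conceptual point -- the only thing beyond routine computation -- is the independence of the two phases $\arg p$ and $\arg q$, which is what makes the two cross-terms jointly controllable and converts the positivity question into the one-variable AM-GM inequality. Everything else is direct expansion.
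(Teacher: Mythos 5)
Your proof is correct, and for the necessity direction it is essentially the paper's computation in different clothing: the paper substitutes the parametrized product vectors $(p,pe^{{\rm i}\tau},qe^{{\rm i}(\theta-\tau)},qe^{{\rm i}\theta})^\ttt$, and your two free phases $\arg p$, $\arg q$ play exactly the role of $\theta$ and $\tau$ there --- in both arguments the key point is that the two cross terms $\re(\alpha pq)$ and $\re(\beta p\bar q)$ can simultaneously be driven to $-|\alpha||p||q|$ and $-|\beta||p||q|$ because the linear system for the two phases (coefficient matrix $\bigl(\begin{smallmatrix}1&1\\1&-1\end{smallmatrix}\bigr)$) is invertible. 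Where you genuinely diverge is the sufficiency direction: the paper disposes of it by asserting that under $|\alpha|+|\beta|\le\sqrt{ab}$ the matrix $\varrho_{a,b,\alpha,\beta}$ is the Choi matrix of a positive \emph{decomposable} map, i.e., it splits as $P+Q^\Gamma$ with $P,Q\ge 0$ (writing $a=a'+a''$, $b=b'+b''$ with $a'b'\ge|\alpha|^2$ and $a''b''\ge|\beta|^2$), whereas your single estimate --- triangle inequality on the cross terms plus AM--GM --- delivers both directions at once and actually supplies the detail the paper leaves as ``easy to see.'' The paper's route buys the extra structural information that these block-positive matrices come from decomposable maps (which is of independent interest in this subject); yours buys a self-contained quadratic-form proof with no appeal to decomposability. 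One pedantic caveat: your parenthetical about the tight pair $(X,Y)$ with $aX^2=bY^2$ implicitly assumes $a,b>0$; in the degenerate case $ab=0$ the converse implication $aX^2+bY^2\ge 2(|\alpha|+|\beta|)XY\Rightarrow|\alpha|+|\beta|\le\sqrt{ab}$ still holds, but one must instead let one of $X,Y$ grow (or shrink) rather than use an interior critical point --- a trivial repair, not a gap.
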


\begin{proof}
Suppose that $\varrho_{a,b,\alpha,\beta}$ is block-positive, and
take a product vector
$$
|\xi\ran=(p, pe^{{\rm i}\tau}, qe^{{\rm i}(\theta-\tau)}, qe^{{\rm i}\theta})^\ttt\in \mathbb C^2\ot\mathbb C^2
$$
with real numbers $p,q$. Then we have
$$
0\le \lan\xi|\varrho_{a,b,\alpha,\beta}|\xi\ran
=p^2a+q^2b +2pq \re\left[ e^{{\rm i}\theta}(\alpha+\beta e^{-2{\rm i}\tau})\right]
$$
for every $p,q\in\mathbb R$, and it follows that
$\left|\re\left[e^{{\rm i}\theta}(\alpha+\beta e^{-2{\rm i}\tau})\right]\right|^2\le ab$
for every $\theta$ and $\tau$. Therefore, we have the required condition.
Under the condition, it is easy to see that $\varrho_{a,b,\alpha,\beta}$
is the Choi matrix of a positive decomposable map.
\end{proof}

\begin{corollary}\label{lem-exp_1}
Suppose that a block-positive $\varrho\in M_2\ot M_2$ has at most one
nonzero diagonal entry. Then all the other entries of $\varrho$ are
zero.
\end{corollary}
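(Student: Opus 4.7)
The hypothesis says at most one of the four diagonal entries $\varrho_{(i,k),(i,k)}$ with $i,k\in\{1,2\}$ is nonzero, so at least three vanish. By pigeonhole, at least one of the pairs $\{(1,2),(2,1)\}$ and $\{(1,1),(2,2)\}$ is entirely zero, which suggests a case split.

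My plan is as follows. First, in the easier case where the anti-diagonal pair $(1,2),(2,1)$ vanishes, Lemma \ref{2x2_block_pos} applies directly and puts $\varrho$ in the form $\varrho_{a,b,\alpha,\beta}$ with $|\alpha|+|\beta|\le\sqrt{ab}$. Since the only two diagonals not known a priori to vanish are $a$ and $b$, the hypothesis forces at most one of $a,b$ to be nonzero, so $ab=0$; hence $\alpha=\beta=0$, leaving only a single diagonal entry of $\varrho$ possibly nonzero, exactly as claimed.

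For the second case, where the main-diagonal pair $(1,1),(2,2)$ vanishes, I would reduce to the first case by conjugating $\varrho$ with the local unitary $I\otimes X$, where $X\in M_2$ swaps the two basis vectors of the second factor. Block-positivity is invariant under such local unitary conjugations, because for any product vector $|\xi\rangle\otimes|\eta\rangle$ one has
\[
\lan\xi\otimes\eta|(I\otimes X)\varrho(I\otimes X^*)|\xi\otimes\eta\ran
=\lan\xi\otimes X^*\eta|\varrho|\xi\otimes X^*\eta\ran\ge 0,
\]
while the conjugation simply permutes the labels on the second qubit and therefore interchanges the two pairs of diagonals. Applying the first case to the conjugated matrix and then undoing the conjugation yields the conclusion for $\varrho$.

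There is no substantive obstacle; the only content beyond bookkeeping is Lemma \ref{2x2_block_pos} itself. The main step is the first case, a one-line consequence of the lemma once $ab=0$ is noted, and the local-unitary reduction handles the remaining case with a routine symmetry argument.
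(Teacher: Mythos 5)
Your proof is correct and follows essentially the paper's intended route: the corollary is stated there without proof as an immediate consequence of the form (\ref{choi-phi-ff}) derived just before Lemma \ref{2x2_block_pos} together with the lemma's bound $|\alpha|+|\beta|\le\sqrt{ab}$ (with $ab=0$ forced by the hypothesis), which is exactly your first case. Your explicit conjugation by $I\ot X$ merely spells out, correctly, the swap symmetry that the paper leaves implicit for the case where the possibly nonzero diagonal entry sits at position $(1,2)$ or $(2,1)$ rather than on the main block diagonal.
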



\begin{proposition}\label{lem-exp_2}
The identity map $\id_2$ on $M_2$ generates an exposed ray in $\mathbb
P_1[M_2,M_2]$.
\end{proposition}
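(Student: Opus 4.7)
The plan is to use criterion (\ref{bidual-2}) with $\phi=\id_2$ and translate it into conditions on the entries of $\choi_\psi$. For a rank-one matrix $s=|y\rangle\langle x|$, a direct computation gives $\choi_{\ad_s}=|\bar y\rangle\langle\bar y|\otimes|x\rangle\langle x|$, whence
\[
\langle\id_2,\ad_s\rangle=|\langle x|y\rangle|^2\qquad\text{and}\qquad
\langle\psi,\ad_s\rangle=\langle y\otimes\bar x|\choi_\psi|y\otimes\bar x\rangle.
\]
The hypothesis $\psi\in\id_2^{\prime\prime}$ thus becomes: $\langle y\otimes\bar x|\choi_\psi|y\otimes\bar x\rangle=0$ for every orthogonal pair $(x,y)$ in $\mathbb C^2$, and the goal is to deduce that $\choi_\psi$ is a nonnegative scalar multiple of $\choi_{\id_2}$.

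The first step is to reduce $\choi_\psi$ to the four-parameter form $\varrho_{a,b,\alpha,\beta}$ of (\ref{choi-phi-ff}). Testing the vanishing condition on the two pairs $(|1\rangle,|2\rangle)$ and $(|2\rangle,|1\rangle)$ forces the diagonal entries $c_{(2,1),(2,1)}$ and $c_{(1,2),(1,2)}$ to vanish. Since $\psi$ is positive, $\choi_\psi$ is block-positive, so the discussion preceding Lemma \ref{2x2_block_pos} gives $\choi_\psi=\varrho_{a,b,\alpha,\beta}$ for some $a,b\ge 0$ and $\alpha,\beta\in\mathbb C$, and Lemma \ref{2x2_block_pos} further imposes $|\alpha|+|\beta|\le\sqrt{ab}$.

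The second step is to test enough additional orthogonal pairs to pin down the four parameters. Using pairs coming from $x=(1,1)^t/\sqrt 2$ and $x=(1,i)^t/\sqrt 2$ (with $y=x^\perp$ in each case) produces two real linear relations whose sum and difference yield $a+b=2\re\alpha$ and $\re\beta=0$. Plugging $a+b=2\re\alpha$ into the chain $\re\alpha\le|\alpha|\le\sqrt{ab}\le(a+b)/2=\re\alpha$ forces equality throughout, so $\alpha\ge 0$ is real, $a=b$, and $\alpha=a$. A third choice such as $x=(1,e^{i\pi/4})^t/\sqrt 2$ then kills $\text{\rm Im}\,\beta$, giving $\beta=0$, and hence $\choi_\psi=a\,\choi_{\id_2}$, i.e.\ $\psi=a\,\id_2$. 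The only real obstacle is the bookkeeping in evaluating $\langle y\otimes\bar x|\varrho_{a,b,\alpha,\beta}|y\otimes\bar x\rangle$ on the specific vectors above; conceptually, the point is that the tight block-positivity bound of Lemma \ref{2x2_block_pos} must be saturated on every orthogonal pair, and this saturation selects precisely the nonnegative scalar multiples of $\choi_{\id_2}$.
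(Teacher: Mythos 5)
Your proof is correct and takes essentially the same route as the paper's: the same reduction of $\choi_\psi$ to the form $\varrho_{a,b,\alpha,\beta}$ via the rank-one matrices $s=|i\ran\lan j|$ with $i\neq j$, the same key bound $|\alpha|+|\beta|\le\sqrt{ab}$ from Lemma \ref{2x2_block_pos}, and the same equality chain $0=a+b-2\re\alpha\ge a+b-2|\alpha|\ge 2\sqrt{ab}-2|\alpha|\ge 0$ forcing $a=b=\alpha$. The only cosmetic difference is that you test three explicit orthogonal pairs where the paper uses the one-parameter family $s=\left(\begin{smallmatrix}1&e^{{\rm i}\theta}\\ -e^{-{\rm i}\theta}&-1\end{smallmatrix}\right)$ to obtain $\beta=0$ and $a+b=2\re\alpha$ simultaneously; note also that your third test vector is redundant, since $|\alpha|=\sqrt{ab}$ together with the full bound $|\alpha|+|\beta|\le\sqrt{ab}$ already gives $\beta=0$.
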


\begin{proof}
Suppose that $\psi\in\id_2^{\pr\pr}$. Taking $s=|i\ran\lan j|\in
M_{2\times 2}$ with $i\neq j$ in (\ref{bidual-2}), we see that
$\choi_\psi$ must be of the form $\varrho_{a,b,\alpha,\beta}$. We
also take $s=\left(\begin{matrix}1&e^{{\rm i}\theta}\\-e^{-{\rm
i}\theta}&-1\end{matrix}\right)$ in (\ref{bidual-2}), to see
that
$$
0=\lan \choi_{\ad_s}, \varrho_{a,b,\alpha,\beta}\ran
=a+b-2\re(\alpha+\beta e^{-2{\rm i}\theta}),
$$
for every $\theta$. Therefore, we have $\beta=0$, and we also have
$$
0= a+b-2\re\alpha \ge a+b -2|\alpha|\ge 2\sqrt {ab} -2|\alpha|\ge 0,
$$
by Lemma \ref{2x2_block_pos}. This happens only when $a=b=\alpha$.
\end{proof}

In order to prove that $\id_r$ is exposed in $\mathbb
P_1[M_r,M_r]$, we suppose that $\psi\in\id_r^{\pr\pr}$.
Considering the $r\times r$ matrix $s=|i\ran\lan k|$ in
(\ref{bidual-2}), we see that the diagonal entries of
$\choi_\psi$ are given by
\begin{equation}\label{exp_diago-ent}
c_{(i,k),(i,k)}=0,\qquad i\neq k,\ i,k=1,2,\dots,r.
\end{equation}
In order to determine off-diagonal entries $c_{(i,k),(j,\ell)}$ of $\choi_\psi$
with $(i,k)\neq (j,\ell)$, we consider the principal submatrix of $[\choi_\psi]_J$ of $\choi_\psi$, with
\begin{equation}\label{4x4-prib-sub}
J=\{(i,k), (i,\ell), (j,k), (j,\ell)\}\subset \{1,\dots,r\}\times\{1,\dots,r\}.
\end{equation}
If $i=j$ or $k=\ell$ then $[\choi_\psi]_J$ is a $2\times 2$ positive matrix and so $c_{(i,k),(j,\ell)}=0$
by the diagonal entries in (\ref{exp_diago-ent}). Otherwise, $[\choi_\psi]_J$ is a
block matrix in $M_2\ot M_2=M_2(M_2)$. We will show that this
is block-positive.

For this purpose, we consider the linear map $\lambda_{i,j}:M_2\to M_r$ defined by
$$
\lambda_{i,j} :\left(\begin{matrix}
a_{11}&a_{12}\\a_{21}&a_{22}\end{matrix}\right) \mapsto
a_{11}|i\ran\lan i| +a_{12}|i\ran\lan j|+a_{21}|j\ran\lan
i|+a_{22}|j\ran\lan j|\in M_m,
$$
for $i,j=1,\dots,r$ with $i\neq j$.
Then the adjoint map $(\lambda_{k,\ell})^*:M_m\to M_2$ is given by
$$
(\lambda_{k,\ell})^* :\sum_{i,j=1}^r a_{ij}|i\ran\lan j|\mapsto
\left(\begin{matrix} a_{kk}&a_{k\ell}\\ a_{\ell
k}&a_{\ell\ell}\end{matrix}\right)\in M_2.
$$
The composition $(\lambda_{k,\ell})^*\circ\psi\circ\lambda_{i,j}$ is a positive map from
$M_2$ into $M_2$, whose Choi matrix is given by
$[\choi_\psi]_J$ with $J$ in (\ref{4x4-prib-sub}). Therefore,  $[\choi_\psi]_J$ is
block-positive.

We see by Corollary \ref{lem-exp_1} that $c_{(i,k),(j,\ell)}\neq 0$ only when $J$ in (\ref{4x4-prib-sub})
contains at least two nonzero diagonal entries of $[\choi_\psi]_J$ .
By diagonal entries given in (\ref{exp_diago-ent}), this happens
only for $c_{(i,i),(j,j)}$ and $c_{(i,j),(j,i)}$ with
$i,j=1,2,\dots,r$ and $i\neq j$. In this case, $J$ is given by
\begin{equation}\label{4x4_ij}
J=\{(i,i),(i,j),(j,i),(j,j)\},
\end{equation}
and the corresponding principal submatrix of $\choi_{\id_r}$
is given by $\choi_{\id_2}$ in (\ref{choi_id}) which is the
Choi matrices of the map $(\lambda_{i,j})^*\circ\id_r\circ\lambda_{i,j}$. On the other hand,
$[\choi_\psi]_J$ is of the form $\varrho_{a,b,\alpha,\beta}$ in (\ref{choi-phi-ff}),
which is the Choi matrix of $(\lambda_{i,j})^*\circ\psi\circ\lambda_{i,j}$.

Now, we show that
$(\lambda_{k,\ell})^*\circ\psi\circ\lambda_{i,j}$ belongs to
$((\lambda_{k,\ell})^*\circ\id_r\circ\lambda_{i,j})^{\pr\pr}$
in $\mathbb P_1[M_2,M_2]$. To see this, take a matrix $s\in
M_{2\times 2}$ of rank one satisfying the relation $\lan
(\lambda_{k,\ell})^*\circ\id_r\circ\lambda_{i,j},
\ad_s\ran=0$. Then we have  $\lan \id_r,
\lambda_{k,\ell}\circ\ad_s\circ(\lambda_{i,j})^*\ran=0$. Furthermore, we see that
$\lambda_{k,\ell}\circ\ad_s\circ(\lambda_{i,j})^*$ belongs to
$\superpos_1$, because
$\lambda_{k,\ell}\circ\ad_s\circ(\lambda_{i,j})^*=\ad_{\hat s}$,
and
$$
\hat s=s_{11}|i\ran\lan k|+ s_{12}|i\ran\lan \ell| +s_{21}|j\ran\lan
k|+ s_{22}|j\ran\lan \ell|\in M_{m\times n}
$$
is of rank one. From $0= \lan \lambda_{k,\ell}\circ\ad_s\circ(\lambda_{i,j})^*,
\psi\ran =\lan \ad_s, (\lambda_{k,\ell})^*\circ\psi\circ
\lambda_{i,j}\ran$, we see that $(\lambda_{k,\ell})^*\circ\psi\circ \lambda_{i,j}$
belongs to $((\lambda_{k,\ell})^*\circ\id_r\circ
\lambda_{i,j})^{\pr\pr}$ in $\mathbb P_1[M_2,M_2]$.

Proposition \ref{lem-exp_2} tells us that the Choi matrix $[\choi_\psi]_J$
of $(\lambda_{i,j})^*\circ\psi\circ \lambda_{i,j}$ with $J$ in (\ref{4x4_ij})
must be a nonnegative scalar multiple of $\choi_{\id_2}$. Because each diagonal entry of $\choi_\psi$
appears in $[\choi_\psi]_J$ and $[\choi_\psi]_K$ for different $J$ and $K$,
we see that the whole matrix $\choi_\psi$ is a nonnegative scalar multiple
of $\choi_{\id_r}$. This completes the proof that the identity map $\id_r$ is exposed. Therefore, we have the following:

\begin{theorem}{\rm (Marciniak \cite{marcin_exp})}\label{ad-exposed}
For every $s\in M_{m\times n}$, the maps $\ad_s:M_m\to M_n$ and
$\ad_s\circ\ttt:M_m\to M_n$ generate exposed rays of the convex cone $\mathbb P_1[M_m,M_n]$.
\end{theorem}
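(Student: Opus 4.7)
The plan is to assemble the reductions already developed in Section 2 together with the exposedness of the identity map $\id_r$ (the content of Section 3, and later Section 4); once those ingredients are in hand, the theorem is essentially bookkeeping.

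First I would invoke the singular value decomposition to write an arbitrary $s\in M_{m\times n}$ as $s=u\sigma v^*$ with $u\in M_m$ and $v\in M_n$ nonsingular and $\sigma=\sum_{i=1}^r|i\ran\lan i|$. Since $\phi\mapsto \ad_{v^*}\circ\phi\circ\ad_u$ is an affine automorphism of $H(M_m,M_n)$ that carries $\mathbb P_1[M_m,M_n]$ onto itself, it sends exposed rays to exposed rays; consequently it is enough to show that $\ad_\sigma$ is exposed. I would then factor $\ad_\sigma=S\circ T$ using the embedding $S:M_r\to M_n$ and the compression $T:M_m\to M_r$ defined in (\ref{ST-def}).

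Now I would feed the (already proved) exposedness of $\id_r$ into Proposition \ref{prop-dual-exposed}: applied with $\phi=\id_r$ this gives that $S=S\circ\id_r$ is exposed in $\mathbb P_1[M_r,M_n]$. Because the adjoint operation $\phi\mapsto\phi^*$ is an affine isomorphism between $H(M_r,M_n)$ and $H(M_n,M_r)$ carrying $\mathbb P_1$ to $\mathbb P_1$, and $T=S^*$, the map $T$ is exposed as well. A second application of Proposition \ref{prop-dual-exposed}, this time with $\phi=T:M_m\to M_r$ in the role of the exposed map, then delivers that $\ad_\sigma=S\circ T$ is exposed in $\mathbb P_1[M_m,M_n]$.

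For the $\ad_s\circ\ttt$ statement, I would use the remark at the end of Section 2: since $\phi\mapsto\phi\circ\ttt$ is an affine automorphism of $H(M_m,M_n)$ that preserves $\mathbb P_1$, it carries exposed rays to exposed rays, and so exposedness of $\ad_s$ immediately gives exposedness of $\ad_s\circ\ttt$. The only substantive step anywhere in this argument is the exposedness of the identity map $\id_r$; that is the real obstacle, and it has been absorbed into Sections 3 and 4. Everything else is a functorial chain of reductions.
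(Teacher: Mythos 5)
Your proposal is correct and follows exactly the paper's own route: the singular value decomposition reduction to $\ad_\sigma$, two applications of Proposition \ref{prop-dual-exposed} with the adjoint isomorphism giving the exposedness of $T=S^*$ in between, the closing remark on $\phi\mapsto\phi\circ\ttt$, and the exposedness of $\id_r$ (established in Sections 3 and 4) as the sole substantive ingredient. Nothing to add.
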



\section{A proof with Woronowicz's method}

Suppose that $\phi:M_m\to M_n$ is a positive linear map.
The main idea of Woronowicz \cite{woro_letter} to show that $\phi$ is exposed is
to define the linear map $\hat\phi:M_m\ot\mathbb C^n\to\mathbb C^n$ by
$$
\hat\phi(a\ot |\eta\ran)=\phi(a)|\eta\ran,\qquad a\in M_m,\ \eta\in\mathbb
C^n,
$$
and the subspace $N_\phi$ in $M_m\ot\mathbb C^n$ by
$$
N_\phi=\spa \{a\ot |\eta\ran\in M_m^+\ot\mathbb C^n:
\phi(a)|\eta\ran=0\}.
$$
See also \cite{woronowicz-1}. In general, we have $N_\phi\subset \ker\hat\phi$.
By the identity
$$
\lan \eta|\phi(a)|\eta\ran=\lan | \bar\eta\ran\lan\bar\eta|,
\phi(a)\ran =\lan a\ot |\bar\eta\ran\lan\bar\eta|, \phi\ran,
$$
we see that
$$
a\ot |\bar\eta\ran\lan\bar\eta| \in \phi^\prime \ \Longleftrightarrow\
a\ot |\eta\ran\in N_\phi
$$
holds for $a\in M_m^+$ and $|\eta\ran\in\mathbb C^n$.
Therefore, we see that $\psi\in\phi^{\pr\pr}$ if and only if (\ref{bidual-1}) holds if and only if
$N_\phi\subset N_\psi$. If $\phi$ satisfies the condition
\begin{equation}\label{woro-con}
\ker\hat\phi=N_\phi,
\end{equation}
then we have $\ker\hat\phi=N_\phi\subset N_\psi\subset \ker\hat\psi$. Therefore, we conclude that there exists a linear map
$X:\mathbb C^n\to \mathbb C^n$ such that $\hat\psi=X\circ\hat\phi$, or equivalently, there exists $X\in M_n$ such that
$$
\psi(a)|\eta\ran=X\phi(a)|\eta\ran,\qquad a\in M_n,\ |\eta\ran\in\mathbb C^m.
$$
We recall that a positive map $\phi$ is called irreducible if the range of $\phi$ has the trivial commutant.
Suppose that $\phi$ is unital and irreducible. Then for every $a\in M_m$, we have
$X\phi(a)=\psi(a)=\psi(a)^*=\phi(a)X^*$. Taking $a=I$, we have $X=X^*$, and $X$ is in the commutant of the range.
Therefore, we have the following:

\begin{theorem}{\rm (Woronowicz \cite{woro_letter})}\label{woro-th}
Suppose that a unital irreducible positive linear map $\phi: M_m\to M_n$ satisfies {\rm (\ref{woro-con})}.
Then $\phi$ is exposed.
\end{theorem}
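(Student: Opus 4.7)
I would take an arbitrary $\psi \in \phi^{\pr\pr}$ and use the Woronowicz hypothesis (\ref{woro-con}) together with unitality and irreducibility to force $\psi$ to be a nonnegative scalar multiple of $\phi$. The preceding discussion has essentially assembled all the pieces; the argument is a chain of three short deductions. The first step is to translate membership in $\phi^{\pr\pr}$ into the inclusion $N_\phi \subset N_\psi$: by (\ref{bidual-1}), for every $a \in M_m^+$ and $|\eta\ran \in \mathbb C^n$, vanishing of $\lan a \ot |\bar\eta\ran\lan\bar\eta|, \phi\ran$ forces the corresponding pairing with $\psi$ to vanish, and the identity $\lan\eta|\phi(a)|\eta\ran = \lan a\ot|\bar\eta\ran\lan\bar\eta|,\phi\ran$ recorded just before the theorem translates this into $a\ot|\eta\ran \in N_\phi \Rightarrow a\ot|\eta\ran \in N_\psi$; the inclusion extends by linearity.

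Next, I would chain this with the hypothesis $N_\phi = \ker\hat\phi$ and the automatic containment $N_\psi \subset \ker\hat\psi$ to get $\ker\hat\phi \subset \ker\hat\psi$. Unitality gives $\hat\phi(I\ot|\eta\ran) = |\eta\ran$, so $\hat\phi$ is surjective onto $\mathbb C^n$; the first isomorphism theorem then produces an $X \in M_n$ with $\hat\psi = X\circ\hat\phi$. Evaluating on elementary tensors yields $\psi(a) = X\phi(a)$ as matrices in $M_n$ for every $a \in M_m$.

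Finally, I would pin down $X$. Hermiticity-preservation of $\psi$ (from its positivity) gives $X\phi(a) = \psi(a) = \psi(a)^* = \phi(a)X^*$ for every self-adjoint $a$; taking $a = I$ and invoking $\phi(I) = I$ forces $X = X^*$, and then $X$ commutes with $\phi(a)$ for every self-adjoint $a$ and hence for every $a \in M_m$. Irreducibility---triviality of the commutant of $\phi(M_m)$---collapses $X$ to a real scalar $\lambda$, so $\psi = \lambda\phi$, and positivity of $\psi(I) = \lambda I$ gives $\lambda \ge 0$. The only conceptually nontrivial piece is the second deduction: recognizing (\ref{woro-con}) as precisely the hypothesis that upgrades the soft inclusion $N_\phi \subset N_\psi$ into a linear factorization $\hat\psi = X\circ\hat\phi$; the rest is routine, with unitality yielding both surjectivity of $\hat\phi$ and self-adjointness of $X$, and irreducibility collapsing $X$ to a scalar.
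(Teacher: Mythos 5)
Your proposal is correct and follows essentially the same route as the paper: translating $\psi\in\phi^{\pr\pr}$ into $N_\phi\subset N_\psi$ via (\ref{bidual-1}), chaining with (\ref{woro-con}) to get $\ker\hat\phi\subset\ker\hat\psi$ and the factorization $\hat\psi=X\circ\hat\phi$, then using $\phi(I)=I$ to force $X=X^*$ and irreducibility to collapse $X$ to a scalar. If anything, you are slightly more careful than the paper at two points---noting that unitality makes $\hat\phi$ surjective (so the factorization is clean) and restricting the identity $\psi(a)=\psi(a)^*$ to self-adjoint $a$ before extending by linearity, as well as recording $\lambda\ge 0$ explicitly---but these are refinements of the same argument, not a different one.
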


We will show that the identity map $\id_r:M_r\to M_r$ satisfies the condition (\ref{woro-con}).
To do this, we put
\begin{equation}\label{xi}
|\xi\ran=(1, \alpha_2,\dots,\alpha_i,\dots,\alpha_r)^\ttt\in\mathbb C^r.
\end{equation}
We fix $i=2,3,\dots,r$ for a moment, and we also put
\begin{equation}\label{eta}
|\eta_i\ran=(\bar\alpha_i,0,\dots,0,-1,0,\dots,0)^\ttt\in\mathbb C^r,
\end{equation}
where $-1$ is at the $i$-entry. Then we have $|\xi\ran\lan\xi|\ot|\eta_i\ran\in N_{\id_r}$. We identify
\begin{equation}\label{V_i}
|\xi\ran\lan\xi|\ot|\eta\ran \ \leftrightarrow\ |\xi\ran\ot|\bar\xi\ran\ot|\eta\ran
\end{equation}
between $M_r^+\ot\mathbb C^r$ and $\mathbb C^r\ot\mathbb C^r\ot\mathbb C^r$ to count dimensions.
For each $i=2,\dots,r$, we define the subspace $V_i\subset \mathbb C^r\ot\mathbb C^r\ot\mathbb C^r$ by
$$
V_i=\spa\{ |\xi\ran \ot |\bar\xi\ran\ot |\eta_i\ran: \alpha_2,\alpha_3,\dots,\alpha_r\in\mathbb C\}.
$$
Because monomials with variables $1$, $\alpha_i$ and $\bar\alpha_i$ are linearly independent \cite{ha+kye_woro_exam},
the dimension of $V_i$ coincides with the number of monomials appearing in the entries of $|\xi\ran\ot|\bar\xi\ran\ot |\eta_i\ran$.
The entries of $|\bar\xi\ran\ot |\eta_i\ran$ are given by the following $2r-1$ monomials
\begin{equation}\label{entries}
\begin{matrix}
\bar\alpha_i, &\bar\alpha_i\bar\alpha_2,&\dots,&\bar\alpha_i^2,&\dots,&\bar\alpha_i\bar\alpha_r,\\
-1, &-\bar\alpha_2,&\dots,&-\bar\alpha_i,&\dots,&-\bar\alpha_r.
\end{matrix}
\end{equation}
Therefore, we see that
$$
\dim V_i=r(2r-1), \qquad i=2,\dots,r.
$$
By the entries (\ref{entries}) of $|\bar\xi\ran\ot |\eta_i\ran$,
we see that the orthogonal complement $V_i^\perp$ is spanned by orthogonal vectors
\begin{equation}\label{vector}
\begin{aligned}
&|j11\ran+|jii\ran,\qquad j=1,2,\dots,r,\\
&|xyz\ran,\qquad x,y=1,2,\dots,r,\quad z=2,\dots,i-1,i+1,\dots,r,
\end{aligned}
\end{equation}
whose cardinality is $r+r^2(r-2)=r^3-r(2r-1)$.
If $r=2$ then we have $\dim N_{\id_2}=6$.

When $r\ge 3$, it is worthwhile to write down the generators of $V_i^\perp$ as follows:
$$
\begin{aligned}
&V_2^\perp: & |111\ran+|122\ran,\,|211\ran+|222\ran,\dots, &|r11\ran+|r22\ran,\ &|xy3\ran,\, &|xy4\ran,\cdots,|xyr\ran,\\
&V_3^\perp: & |111\ran+|133\ran,\,|211\ran+|233\ran,\dots, &|r11\ran+|r33\ran,\ &|xy2\ran,\, &|xy4\ran,\cdots,|xyr\ran,\\
&\cdots\\
&V_r^\perp: & |111\ran+|1rr\ran,\,|211\ran+|2rr\ran,\dots, &|r11\ran+|rrr\ran,\ &|xy2\ran,\, &|xy3\ran,\cdots,|xy(r-1)\ran,\\
\end{aligned}
$$
to see that $V_j^\perp\cap V_k^\perp=V_2^\perp\cap V_3^\perp\cap\dots\cap V_r^\perp$ is spanned by the following $r$ vectors
$$
\begin{aligned}
|\zeta_1\ran:=&|111\ran+|122\ran+|133\ran+\cdots+|1rr\ran,\\
|\zeta_2\ran:=&|211\ran+|222\ran+|233\ran+\cdots+|2rr\ran,\\
\cdots\\
|\zeta_r\ran:=&|r11\ran+|r22\ran+|r33\ran+\cdots+|rrr\ran,
\end{aligned}
$$
for $j,k=2,3,\dots,r$.
So far, we have seen that $N_{\id_r}= \spa\{V_2, V_3,\dots, V_r\}$ and
$$
N_{\id_r}^\perp= V_2^\perp\cap V_2^\perp\cap\dots\cap V_r^\perp =\spa\{|\zeta_1\ran,\dots,|\zeta_r\ran\}.
$$
Therefore, we have $\dim N_{\id_r}=r^3-r$ which coincides with the dimension of $\ker\widehat{\id_r}$.
By Theorem \ref{woro-th}, we conclude that the identity map is exposed.

It turns out that the maps $S:M_r\to M_n$ and $T:M_m\to M_r$ defined in (\ref{ST-def})
do not satisfy the conditions of Theorem \ref{woro-th} unless $m=n=r$. In the remainder of this note,
we show that Woronowicz's argument is still useful to show the exposedness of the maps $S$ and $T$,
without Proposition \ref{prop-dual-exposed}. See also \cite{chru--W}.

With a modification of the above argument, one may show that the map $S$ also satisfies the condition (\ref{woro-con}).
We modify $|\eta_i\ran$ in (\ref{eta}) and $V_i$ in (\ref{V_i}) by
$$
\begin{aligned}
|\eta_i\ran&=(\bar\alpha_i,0,\dots,0,-1,0,\dots,0,\beta_{r+1},\dots,\beta_n)\in\mathbb C^n,\\
V_i&=\spa\{|\xi\ran\ot|\bar\xi\ran\ot|\eta_i\ran: \alpha_2,\dots,\alpha_r,\beta_{r+1},\dots,\beta_n\}
    \subset\mathbb C^r\ot\mathbb C^r\ot\mathbb C^n.
\end{aligned}
$$
Then $\dim V_i=r[(2+n-r)r-1]$ and $V_i^\perp$ is spanned by the vectors (\ref{vector}) in $\mathbb C^r\ot\mathbb C^r\ot\mathbb C^n$.
We also have $\dim N_S^\perp=r$, and so $\dim N_S=r^2n-r$. This coincides with the dimension of $\ker \hat S$, and so
the map $S$ satisfies (\ref{woro-con}). Therefore, we see that
there exists $X\in M_n$ such that $\psi(a)=XS(a)$ for any $\psi\in S^{\pr\pr}$. Even though $S$ is not irreducible and $X$ itself need not
to be a scalar matrix, one see that $\psi$ is a scalar multiple of $S$ from the identity $\psi(a)=X S(a)$.

It should be noted that the map $T:M_m\to M_r$ does not satisfy the condition (\ref{woro-con}) even though $T$ is exposed.
In this case, $|\xi\ran$ in (\ref{xi}) is a vector in $\mathbb C^m$, and $V_i\subset \mathbb C^m\ot\mathbb C^m\ot\mathbb C^r$
with $\dim V_i=m(2m-1)$. The orthogonal complement $V_i^\perp$ is spanned by
$$
\begin{aligned}
&|j11\ran+|jii\ran,\qquad j=1,2,\dots,m,\\
&|xyz\ran,\qquad x,y=1,2,\dots,m,\ z=2,\dots,i-1,i+1,\dots,r
\end{aligned}
$$
whose cardinality is $m+m^2(r-2)=m^2r-m(2m-1)$. The main difference occurs in
the space $V_2^\perp\cap V_3^\perp\cap\dots\cap V_r^\perp$, which is spanned by
$$
|\zeta_k\ran:=|k11\ran+|k22\ran+|k33\ran+\cdots+|krr\ran,\qquad k=1,2,\dots m.
$$
Therefore, we have
$$
\dim N_T=m^2r-m,\qquad \dim\ker\hat T=m^2r-r.
$$
Note that $|\zeta_k\ran$ belongs to $\ker\hat T \ominus N_T$ for $k=r+1,r+2,\dots,m$.
When $\psi\in T^{\pr\pr}$, one may show that $|\zeta_k\ran\in \ker\hat\psi$ for $k=r+1,r+2,\dots,m$.
Then we have $\ker \hat T\subset\ker \hat\psi$, and there exists $X\in M_r$ such that $\psi(a)=XT(a)$ for any $\psi\in T^{\pr\pr}$.
Since $T$ is unital irreducible, we conclude that $\psi$ is a scalar multiple of the map $T$.

\end{document}